\patchcmd{\maketitle}{\@copyrightspace}{}{}{}
\newtheorem{theorem}{Theorem}
\newtheorem{definition}{Definition}
\newcommand{\rref}[2][]{\prettyref{#2}}
\begin{document}



\renewcommand{\baselinestretch}{0.99}

\title{Forward Invariant Cuts to Simplify Proofs of Safety}

\numberofauthors{5}

\author{
%
%
\alignauthor
Nikos Ar\'echiga\\
       \affaddr{Carnegie Mellon University}\\
       \affaddr{Toyota InfoTechnology Center}\\
       \affaddr{Mountain View, CA}\\
       \email{\small{narechig@us.toyota-itc.com}}
\alignauthor
James Kapinski\\
       \affaddr{Toyota Technical Center}\\
       \affaddr{1630 W. 186th}\\
       \affaddr{Gardena, CA 90248}\\
       \email{\small{jim.kapinski@tema.toyota.com}}
\and
\alignauthor Jyotirmoy V. Deshmukh\\
       \affaddr{Toyota Technical Center}\\
       \affaddr{1630 W. 186th}\\
       \affaddr{Gardena, CA 90248}\\
       \email{\small{jyotirmoy.deshmukh@tema.toyota.com}}
\alignauthor Andr\'e Platzer\\
       \affaddr{Carnegie Mellon University}\\
       \affaddr{5000 Forbes Ave}\\
       \affaddr{Pittsburgh, PA}\\
       \email{\small{aplatzer@cs.cmu.edu}}
\alignauthor Bruce Krogh\\
       \affaddr{Carnegie Mellon University}\\
       \affaddr{5000 Forbes Ave}\\
       \affaddr{Pittsburgh, PA}\\
       \email{\small{krogh@rwanda.cmu.edu}}

}


\maketitle               

\begin{abstract} 
The use of deductive techniques, such as theorem provers, has several
advantages in safety verification of hybrid systems; however, 
state-of-the-art theorem provers require extensive manual intervention.
Furthermore, there is often a gap between the type
of assistance that a theorem prover requires to make progress on a
proof task and the assistance that a system designer is able to
provide.  
This paper presents an extension to \keymaera, a deductive
verification tool for differential dynamic logic; the new technique
allows {\em local reasoning} using system designer intuition about
performance within particular modes as part of a proof task.  
Our approach allows the theorem prover to
leverage forward invariants, discovered using numerical techniques,
as part of a proof of safety.  
We introduce a new inference rule into the proof calculus of  
\keymaera, 
the {\em forward invariant cut rule}, and we present a methodology 
to discover useful forward invariants, which are then used 
with the new cut rule to complete verification tasks. We 
demonstrate how our new approach can be used to complete 
verification tasks that lie out of the reach of existing deductive
approaches using several examples, including one involving an 
automotive powertrain control system.


\end{abstract}

\begin{keywords}
hybrid systems, formal verification, theorem provers
\end{keywords}

\section{Introduction}\label{sec:intro}
Modern physical systems such as automobile
engines, avionics, and medical devices are controlled by software running on embedded
computing platforms.  In the software domain, techniques such as model
checking, theorem proving, and abstract interpretation have had
success verifying purely software systems. For physical systems,
techniques from dynamical systems theory and control theory such as
Lyapunov analysis have long been used to help characterize
system performance.  Most cyberphysical systems, however,
are {\em hybrid}, \ie, have both continuous state evolution governed
by differential equations and discrete mode transitions.
Most interesting analyses for such
systems (\eg, reachable set estimation) are undecidable
\cite{Henzinger1998}, and most software verification techniques are not
directly applicable.

Many extant approaches to hybrid system verification focus on creating an
overapproximation of the set of system states reachable over a fixed time
horizon
\cite{kapinski2004verifying,Frehse2008phaver,Frehse2011spaceex,chen2013}.  While
these approaches enjoy a high degree of automation, they are restricted in scope
and scalability.  Tools such as SpaceEx \cite{Frehse2011spaceex} and Flow*
\cite{chen2013} are susceptible to approximation error that worsens when the
reachable set estimation over continuous state-space interacts with discrete
switching, leading to false positives. The theorem prover PVS has been
used to reason about hybrid systems as composable hybrid automata in
\cite{PVS1} \cite{PVS2}. However, the continuous components are modeled by the
explicit solutions of the differential equations. Explicit solutions can
only be obtained for restricted classes of differential equations, e.g.
linear. On the other hand, d$\mathcal{L}$ allows reasoning about
continuous dynamics by using only the differential equations.


An alternative approach is to employ deductive techniques that attempt
to construct a symbolic proof of safety using a semi-interactive
theorem prover \cite{PlatzerLAHS}.
This approach has several advantages
in safety verification of hybrid systems. Unlike explicit reach-set computation
techniques, theorem provers can handle nonlinear dynamics directly, without
introducing approximation artifacts. Further, theorem provers can handle proof
tasks that involve symbolic parameters, with only the minimal constraints
required to guarantee safety. This makes the verification result reusable across
systems with parameter variations.
In the context of
dynamical or hybrid systems verification, a human may provide insight
to the theorem proving tool in the form of a {\em safety certificate},
\ie, a symbolic expression representing a set containing all reachable
states from a given initial set, while excluding unsafe states
\cite{beckert2007verification,PlatzerLAHS}. The tool can then
use this certificate to automatically prove system safety.

In \cite{PlatzerClarke2008}, the authors propose an approach that
begins with a global candidate certificate (in the form of a
{\em differential invariant}) that overapproximates
the reachable set of states. Constraints are iteratively added
until the overapproximation is small enough to exclude the unsafe set,
at which point the invariant becomes a safety certificate.  This
approach has had success in verifying aircraft roundabout maneuvers
using the \keymaera theorem prover. 
The notable aspects of this approach are: the initial input is in the
form of a {\em global certificate} of system safety, which is {\em
eagerly} constructed and then (globally) refined.

Cyberphysical system 
designs
have 
distinct modes of operation, with each mode 
corresponding to an (often) independently designed controller
operating regime.  Consequently, a designer has much more nuanced
information about mode-specific behaviors rather than overarching
knowledge about the entire system. The central thesis of this paper
is that when available, such additional information can be
useful for a theorem prover compared to a technique relying on
construction of a global safety certificate.  Our
approach encourages {\em local reasoning} and {\em lazy
construction} of 
certificates.

As an example of augmented local information, consider 
the scenario where a
designer knows that from a given set of modes, there are no discrete
transitions to unsafe system modes.  This is a form of
local certificate; in this case derived purely by reasoning over the
finite transition structure of the discrete modes.
Also consider the designer
insight that a system is expected to be stable in a certain mode. 
This is another form of local information that 
makes it possible to employ Lyapunov analysis-based techniques to obtain
a forward invariant set or barrier certificate that provides a
local certificate for that mode.


To support local reasoning, we introduce a new proof rule that we call
the {\em forward invariant cut rule} in the calculus of \keymaera.
Given a region of operation and a safe forward invariant for the
behaviors of that mode, the forward invariant cut rule allows us to decompose the
overall global safety proof into three proof obligations: (1) a proof
of invariance of the proposed certificate, (2) a proof that the
certificate guarantees safety, and (3) a proof of safety of everything
{\em but} the behaviors associated with the region covered by the certificate.
This makes it possible to carve out safe behavior and focus analysis only on the remaining part of the system.
An advantage of
the decompositional approach is that it allows us to defer the process
of producing a local certificate until we reach the relevant sub-goal
in the safety proof.  In other words, it allows {\em lazy}
construction of safe forward invariants, which is convenient as
certificates for 
system components are often easier to obtain than certificates for the aggregate system.



We demonstrate how our methodology can be used to complete verification
tasks that lie out of the reach of existing deductive techniques.
The systems we consider are hybrid and contain examples with continuous 
dynamical behaviors that are described by nonlinear ordinary differential 
equations (ODEs). Deductive approaches exist for addressing this class of 
systems, but the existing frameworks alone are insufficient to complete 
the proof tasks for the examples herein. For example, 
the framework in \cite{PlatzerClarke2008} provides a means to address the
examples we present using differential invariants, but the authors provide
no general method of computing the required differential invariant candidates.
Further, their technique requires reasoning about the global behavior of the
system, as opposed to the 
local invariant property that we require (a much weaker requirement).
%
The deductive proof system presented in \cite{dimitrova2014deductive}
uses local safety certificates to reason about behaviors but applies to
continuous (as opposed to hybrid) systems. Also, \cite{dimitrova2014deductive} provides no constructive means of
generating the necessary local safety certificates.
This is in contrast to our approach, which provides  methodologies for generating
the local safety certificates and including them in the proof task.

We present three examples that demonstrate the practical application 
of the forward invariant cut rule.
The first hybrid system is a hybrid system with three stable modes and one fail
mode.
The second system is a non-autonomous switched system, in which a user has the freedom to switch modes
at arbitrary instants.
The third system is a
simplified
model of an automotive subsystem that is responsible for
maintaining the air-to-fuel (A/F) ratio in an engine near an
optimal setpoint. In the automotive context, this is one of the most
important control problems with significant implications on fuel
efficiency and exhaust gas emissions. We are able to prove that the
A/F ratio remains within 10\% of the optimal setpoint value
using \keymaera.  


The paper is organized as follows. In \secref{prelim}, we
introduce the terminology and review material on hybrid programs (the
syntactic form used by \keymaera to express hybrid systems).  We
introduce the forward invariant cut rule in \secref{hybridcutsec}, and in
\secref{safetycertificates} we describe techniques for obtaining local
certificates.  We show how the forward invariant cut rule can be applied
to specific case studies in \secref{examples}. Finally, we conclude
and discuss related and future work in \secref{conclusions}.

\section{Hybrid systems and hybrid programs}\label{sec:prelim}
A hybrid system is a dynamical system with continuous-valued state
variables $\vx$ that take values from a domain $X\subseteq
\Reals^n$ and a discrete-valued state variable $q$ taken from a
finite set $Q$.  The system evolves in continuous or discrete time,
and the configuration of a hybrid system at time $t$ can be described
by the values of its continuous and discrete state variables.  The
discrete-valued states are called {\em modes} of operation.  The
hybrid state is given by the ordered pair $(\vx,q)\in X \times Q$.  In a
discrete mode $q$, the evolution of the continuous-valued state variables
is described by ordinary differential equations (ODEs)
\begin{eqnarray} \dot{\vx}(t)=f_q(\vx(t)), \end{eqnarray} where $f_q$
is a function from $X$ to $X$, often called the vector field. Though
hybrid systems are often described with external inputs, in this
paper we consider only {\em autonomous systems}, \ie, systems in
which all transitions depend only on the system states.  The
state-dependent conditions that allow the system to transition from
one discrete state to another (possibly same) discrete state are
called {\em guards}.

Hybrid systems are often modeled using hybrid automata.
We use \rref{fig:widget} as a running
example.  This example has four modes and two continuous-valued state
variables, with associated ODEs.  Modes are represented by nodes in
the graph; each mode $q$ has associated a unique set of ODEs ($f_q$).
There is a guard on the outgoing transition from $q_0$ to
$q_1$, and the transition from $q_0$ to $q_2$ is unguarded, so it can
always be taken.
The transition from $q_0$ to $q_2$ has a
nondeterministic reset allowing a jump from current state values $x_1$
and $x_2$ to any pair of values within the circle of radius two.  The
set of feasible initial conditions is indicated on the default
transition. Mode $q_0$ and $q_2$ have stable linear dynamics, and $q_1$
has stable nonlinear dynamics, as in Example 4.10 of \cite{Khalil2002}.

\begin{figure*}[t!]
\centering
\adjustbox{max width=\textwidth}{%
\begin{tikzpicture}[>=stealth']
\tikzstyle{smalltext}=[font=\fontsize{7}{7}\selectfont]
\tikzstyle{state}=[smalltext,rectangle,rounded corners,draw,minimum height=2em,inner sep=2pt,align=center]
\node[state] (mode1) {$\begin{array}{lll}
                       & q_1 & \\
                       \left[\begin{array}{l} 
                               \dot{x_1} \\
                               \dot{x_2}
                              \end{array}\right] & = &
                       \left[\begin{array}{c}
                                -(x_2 + 1) x_1 \\
                                x_1^2
                             \end{array}\right]
                       \end{array}$};

\node[coordinate,below of=mode1,node distance=18mm] (center) {};                        
\node[state,left of=center,node distance=30mm] (mode0) 
                      {$\begin{array}{lll}
                       & q_0 & \\
                       \left[\begin{array}{l} 
                               \dot{x_1} \\
                               \dot{x_2}
                              \end{array}\right] & = &
                       \left[\begin{array}{l}
                                -x_1 \\
                                -x_2 
                             \end{array}\right]
                       \end{array}$};
\node[state,right of=center,node distance=30mm,text width=5em] (modef)
                       {\textbf{fail}};

\node[state,below of=center,node distance=18mm] (mode2) 
                      {$\begin{array}{lll}
                       & q_2 & \\
                       \left[\begin{array}{l} 
                               \dot{x_1} \\
                               \dot{x_2}
                              \end{array}\right] & = &
                       \left[\begin{array}{l}
                                -3x_1 + 13x_2 \\
                                -5x_1 - x_2 
                             \end{array}\right]
                       \end{array}$};

\draw[->] (mode0) to[out=90,in=180] node[smalltext,left] {$x_1^2 + x_2^2 < 1 /$} (mode1);
\draw[->] (mode1) to[out=0,in=90] node[smalltext,right]  {$\begin{array}{l} 
                                                           (x_1 < -10 \vee x_1 > 10) \vee \\
                                                           (x_2 < -10 \vee x_2 > 10)/
                                                           \end{array}$} (modef);
\draw[->] (mode0) to[out=270,in=180] node[smalltext,left] {$\begin{array}{l}
                                                              \vspace{0.3em}
                                                              / (x_1,x_2) := \\ 
                                                              \{(v_1,v_2) \mid  (v_1^2 + v_2^2 < 4)\}
                                                            \end{array}$} (mode2);
\draw[->] (mode2) to[out=0,in=270] node[smalltext,right]  {$\begin{array}{l} 
                                                           (x_1\! < -10 \vee x_1\! > 10) \vee \\
                                                           (x_2\! < -10 \vee x_2\! > 10)/
                                                           \end{array}$} (modef);
\draw[->] (mode0) to node[smalltext,below] {$\begin{array}{l}                                                             
                                             (x_1\! < -10 \vee x_1\! > 10) \vee \\
                                             (x_2\! < -10 \vee x_2\! > 10)/            
                                            \end{array}$} (modef);

\node[left of=mode0,node distance=35mm,coordinate] (ghost) {};

\draw[->] (ghost) to node[smalltext,above] {$\begin{array}{l}
               x_1^2+x_2^2\leq 10 /                             
      \end{array}$} (mode0);

\end{tikzpicture}
}
\caption{A running example: All the modes have stable continuous dynamics, and there
is a special ``fail'' mode. \label{fig:widget}}
\end{figure*}



While hybrid automata are a convenient formalism, in this paper we
use the formalism of hybrid programs in order to facilitate
the use of the \keymaera theorem prover, which is the workhorse for
our deductive approach. Note that any hybrid automaton can be transformed into a hybrid program \cite{PlatzerLAHS}, therefore there is no loss of generality in considering hybrid programs.  \keymaera uses the formalism of
\emph{differential dynamic logic}, denoted by \dL.\footnote{ The syntax and semantics of \dL
are described in detail in \cite{PlatzerLAHS}; we provide only a
minimal overview here.}

\subsection{The logic \dL}
A {\em hybrid program} is specified by the grammar
\begin{align}
\label{eq:hp_syntax}
\alpha, \beta ::= ~&x\coloneqq \theta \mid
		x\coloneqq * \mid
		\{x_1' = \theta_1, \dots, x_n' = \theta_n\& H\} \\
		&\mid ?H \mid
		\alpha \cup \beta \mid \alpha;\beta \mid
		\alpha^*
\end{align}
where $\alpha, \beta$ are hybrid programs, $\theta, \theta_1, \dots,
\theta_n$ are terms, and $H$ is a logical formula. Intuitively, the
program $x \coloneqq \theta$ means that $x$ is assigned
the value of the term $\theta$.  The program $x \coloneqq *$ means
that $x$ is nondeterministically assigned an arbitrary
real value.  The program $\{x_1' = \theta_1, \dots, x_n' = \theta_n\&
H\}$ means that the variables $x_1, \dots, x_n$ evolve continuously
for some duration, with derivatives $\theta_1, \dots, \theta_n$,
subject to the constraint that $x_1, \dots, x_n$
satisfy $H$ during the entire flow.
The
hybrid program $?H$ behaves as a {\em skip} if the logical formula $H$
is true, and as an {\em abort} otherwise.

The nondeterministic choice $\alpha \cup \beta$ means that either
$\alpha$ or $\beta$ may be executed.  The sequential
composition $\alpha; \beta$ means that $\alpha$ is executed,
then $\beta$.  The nondeterministic repetition $\alpha^*$ means that
$\alpha$ is executed an arbitrary (possibly zero) number of times.
The logic \dL itself is a multimodal logic, in which the modalities
are annotated with hybrid programs. The formulas of \dL are described
by the grammar:
\begin{align}
\label{eq:dl_syntax}
\phi, \psi ::= ~&\theta_1 = \theta_2 \mid
		\theta_1 \ge \theta_2 \mid
		\neg \phi \mid
		\phi \wedge \psi\\
		& \mid \phi \vee \psi \mid
		\phi \rightarrow \psi \mid
		[\alpha] \phi \mid
		\left< \alpha \right> \phi
\end{align}
where $\phi$, $\psi$ are formulas of \dL, $\theta_1$, $\theta_2$ are
terms, and $\alpha$ is a hybrid program.  The box modality $[\alpha]
\phi$ means that $\phi$ holds after all traces of the hybrid program
$\alpha$, and $\left< \alpha \right>\phi$ means that $\phi$ holds
after some execution of hybrid program $\alpha$.

In the sequel, we will
abuse notation and use a formula interchangeably with the set that
it represents.


\subsection{Example}
\rref{mod:widget} shows a hybrid program representation of the running
example. \rref{line:widget_overview} shows how the subprograms are
assembled into the overall program. The system starts at a set
$I$ (\rref{line:widget_init}), and at
each iteration of the loop, one of the subprograms is
nondeterministically chosen for attempted execution. If the guard of
the subprogram succeeds, execution proceeds. The verification task is to show
that when this loop is executed any (finite)  number
of times, the state remains in the set $S$
(\rref{line:widget_safety}).
\rref{line:widget_mode0} is the guard and differential equations of $q_0$.
\rref{line:widget_s0_1} is the transition from $q_0$ to $q_1$ and the required guard.
\rref{line:widget_mode1_odes} proceeds
to specify the continuous evolution of $q_1$.
\rref{line:widget_s0_2_reset} applies the reset of the transition into $q_2$, which
indicates that the state resets anywhere in the circle of radius two.
\rref{line:widget_mode2_guard} checks the incoming guard to $q_2$
and \rref{line:widget_mode2_odes} specifies
the associated differential equations.
\rref{line:widget_s12_fail_guard}
specifies the guard that allows transitions into the failure mode. Note that the guard does not check the current
mode, since all of the modes may transition into the failure mode if the continuous states leave their prescribed bounds.
Line \ref{line:widget_sfail_fail_guard2} specifies that once
the failure mode is entered, it is not possible to leave it, and states $x_1, x_2$ maintain their previous values and do not evolve.

\SetAlgorithmName{Model}{model}{List of models}

\begin{algorithm}[h!]
\DontPrintSemicolon
\small{
\makebox[6em][r]{$\text{Ex}$} $\equiv$ $I \rightarrow [
				(\mathtt{m}_0 
				\cup \mathtt{s}_{0 \mapsto 1} 
				\cup \mathtt{m}_1 \cup$
                                $\mathtt{s}_{0 \mapsto 2} 
				\cup$ $\mathtt{m}_2$\;
				\makebox[7em][r]{}\hspace{1cm}$\cup \mathtt{s}_{\{0,1,2\} \mapsto fail} \cup$
                                $\mathtt{m}_{fail})^*] S$
	\nllabel{line:widget_overview} \;
\makebox[6em][r]{$I$} $\equiv$ $ x_1^2 + x_2^2 \le 10~\wedge~ M = q_0$
 	\nllabel{line:widget_init} \;
\makebox[6em][r]{$\mathtt{m}_0$} $\equiv$ $?(M = q_0);~\{x_1' = -x_1, x_2' = -x_2\}$
	\nllabel{line:widget_mode0} \;
\makebox[6em][r]{$\mathtt{s}_{0 \mapsto 1}$} $\equiv$ $?(M = q_0);~?(x_1^2 + x_2^2 < 1);~(M \coloneqq q_1);$
	\nllabel{line:widget_s0_1} \;
\makebox[6em][r]{$\mathtt{m}_1$} $\equiv $ $?(M = q_1);$
	\nllabel{line:widget_mode1_guard} \;
\makebox[7em][r]{} $\{ x_1' = -(x_2 + 1)*x_1,~x_2' = x_1^2 \}$
	\nllabel{line:widget_mode1_odes} \;
\makebox[6em][r]{$\mathtt{s}_{0 \mapsto 2}$} $\equiv$ $?(M = q_0);$
	\nllabel{line:widget_s0_2_guard}\;
\makebox[7em][r]{} $~x_1 \coloneqq *;~x_2 \coloneqq *; ?(x_1^2 + x_2^2 < 4);~(M \coloneqq q_2)$
	\nllabel{line:widget_s0_2_reset} \;
\makebox[6em][r]{$\mathtt{m}_2$} $\equiv$ $?(M = q_2);$
	\nllabel{line:widget_mode2_guard} \;
\makebox[7em][r]{} $\{x_1' = -3 x_1 + 13 x_2,~x_2' = -5 x_1 - x_2\}$
	\nllabel{line:widget_mode2_odes} \;
\makebox[6em][r]{$\mathtt{s}_{\{0,1,2\} \mapsto fail}$} $\equiv$ $?(-10 > x_1 ~\vee~x_1 > 10$\;
	\makebox[7em]{}$ ~\vee~-10 > x_2~\vee x_2 > 10);$
	\nllabel{line:widget_s12_fail_guard} \;
\makebox[7em][r]{} $M \coloneqq fail$
	\nllabel{line:widget_s12_fail_assign}\;
\makebox[6em][r]{$\mathtt{m}_{fail}$} $\equiv$ $?(M = fail);$
	\nllabel{line:widget_sfail_fail_guard2} \;
\makebox[6em][r]{$S$} $\equiv$ $M \neq fail$
	\nllabel{line:widget_safety} \;
}
\caption{Hybrid program for the running example}
\label{mod:widget}
\end{algorithm}

\section{Safety verification with the forward invariant cut rule}\label{sec:hybridcutsec}
\subsection{The safety verification problem}\label{sec:safety} 

The safety verification problem is to decide whether the state of a system
is always contained within a given safe set when starting from a designated
initial set, or equivalently, whether none of the behaviors enter an unsafe
set.

To formalize this problem in \dL, suppose $\alpha$ is a hybrid program
representation of the system of interest. Suppose $S$ is the safe set and
$I$ is the set of initial states.  Then the behaviors of $\alpha$ are
contained in $S$ if the following formula is a theorem of \dL.
\[
I \rightarrow [ \alpha^* ] S.
\]
The theorem prover \keymaera can be used to attempt to prove this.

To solve this problem, one might construct a set that contains all of the
system behaviors from the initial set and is contained in the safe set. We
call such a set a \emph{safety certificate}.  A safety certificate must
contain the initial state set, exclude the unsafe set, and be invariant for
system behaviors. We say that a set is \emph{initialized} if it includes
the initial set, \emph{safe} if it excludes the unsafe set, and
\emph{invariant} if whenever a system behavior enters it, the behavior
remains in the set for all future time.  Arguments with safety certificates
are captured in \dL using the invariant proof rule, where $C$ is a safety
certificate:
\[
\frac{I \rightarrow C~~~C \rightarrow[ \alpha ] C~~~C \rightarrow S}{I \rightarrow [ \alpha^* ] S }
\]
The general task of finding a safety certificate is difficult. In this
work, we propose instead a procedure that incrementally works towards a
proof.  Instead of a safety certificate, we use knowledge of system
structure to propose sets that are invariant and safe, but not necessarily
initialized, and leverage them in the proof procedure.

In our running example, modes $q_1$ and $q_2$ have stable dynamics. If a
Lyapunov function can be computed for either of these modes, its sublevel
sets (i.e., sets of the form $\{x ~|~ V(x) \le \ell\}$, for some $\ell
\ge 0$) will be invariant. The sublevel sets will be safe if they exclude the
transition to the fail mode, but they will not be initialized, since they
do not contain mode zero.

\subsection{The forward invariant cut rule}\label{sec:hybridcut}

A cut in a logical proof allows introducing a lemma. The main contribution
of this paper is a type of cut that simplifies the proof procedure by
leveraging knowledge of local invariance properties.

The following theorem establishes that if it can be shown that a predicate
($C$) is locally invariant ($C\rightarrow [\alpha]C$) and safe
($C\rightarrow S$), then the remaining conditions ($\neg C$) can be
separately addressed to prove safety. 

\begin{theorem}[Forward Invariant Cut Rule]
The following is a sound inference rule for the logic d$\mathcal{L}$.
\begin{equation}
\label{eq:hybrid_cut}
\frac{I \wedge \neg C \rightarrow [(\alpha ;? \neg C )^*] S~~~
C \rightarrow [\alpha] C~~~
C \rightarrow S}
{I \rightarrow [\alpha^*] S}
\end{equation}
\end{theorem}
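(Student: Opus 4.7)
The plan is to reason semantically about traces of $\alpha^*$ and show that every reachable final state lies in $S$, by case-splitting on whether the trace ever produces a state in $C$. Fix a state $s_0$ satisfying $I$, and let $s_0, s_1, \dots, s_n$ be any finite trace of $\alpha^*$, i.e., a sequence where each $(s_i, s_{i+1})$ lies in the transition relation $\rho(\alpha)$. I need to show $s_n \in S$.

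First I would split on whether there exists an index $k$ with $s_k \in C$. If there is such a $k$, take the smallest one. The second premise $C \rightarrow [\alpha]C$ asserts forward invariance of $C$ under a single step of $\alpha$, so by a straightforward induction on $j \geq k$, every $s_j \in C$; in particular $s_n \in C$, and then the third premise $C \rightarrow S$ delivers $s_n \in S$. This is the easy case and it handles every trace that ever reaches $C$, including those starting in $C$.

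The remaining case is that $s_i \in \neg C$ for every $i = 0, \dots, n$. Here $s_0 \models I \wedge \neg C$, so the first premise applies provided the trace is actually a trace of $(\alpha;?\neg C)^*$. This is the step where I would be careful: by the semantics of sequential composition and tests in d$\mathcal{L}$, the pair $(s_i, s_{i+1})$ lies in $\rho(\alpha;?\neg C)$ iff $(s_i, s_{i+1}) \in \rho(\alpha)$ and $s_{i+1} \models \neg C$, both of which hold by assumption. Composing these steps yields a trace of $(\alpha;?\neg C)^*$ from $s_0$ to $s_n$, and the first premise then gives $s_n \in S$.

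The subtle point — and what I would expect to be the main thing to argue carefully rather than any real obstacle — is that $C \rightarrow [\alpha]C$ talks only about the endpoints of a single $\alpha$-step and not about intermediate values during continuous evolution, so the induction that carries $C$ through the tail of the trace must be formulated at the level of post-states of $\alpha$, not at the level of points along a flow. Once that is clear, the case analysis is exhaustive (either some post-state is in $C$ or none is), the two cases together yield $s_n \in S$ for every trace from every $I$-state, and by the semantics of $[\alpha^*]$ this establishes $I \rightarrow [\alpha^*] S$, proving soundness of the rule.
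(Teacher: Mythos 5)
Your argument is correct and is essentially the paper's own proof: the same case split on whether the trace ever produces a post-state in $C$, with the second and third premises propagating $C$ (hence $S$) forward from the first such state, and the first premise covering traces whose post-states all avoid $C$, which are then literally runs of $(\alpha;?\neg C)^*$. Your remark that the invariance premise $C \rightarrow [\alpha]C$ speaks only about endpoints of $\alpha$-steps, so the induction must live at the level of iteration boundaries rather than points along a continuous flow, is exactly the right point of care and matches the paper's semantic treatment via $\rho_{\mathcal{I},\eta}(\alpha)$.
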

\begin{proof}

We first provide a sketch in natural language. Let $\nu_0, \nu_1, \dots, \nu_n$
be any sequence of states of any length that are connected by runs of the hybrid
program $\alpha$.

{\bf Case a: } Suppose that none of the states in this sequence satisfy
$C$. Then this sequence is a run of the hybrid program $(\alpha; ?\neg
C)^*$, and is safe by the first premise.\label{casea}

{\bf Case b: } On the other hand, suppose $\nu_i \in C$ for some $0 \le i
\le n$.  Then the subsequence $\nu_i, \dots, \nu_n$ is a run of the program
$\alpha$ starting from $C$.  Then from the second and third premises of the
rule, $\nu_j \in C \subseteq S$ for all $j \ge i$. Note that the
subsequence $\nu_1, \dots, \nu_{i-1}$ is a run of program $\alpha^*$ such
that no state satisfies $C$, and is therefore safe by the previous case.

The formal proof follows.  Fix an interpretation $\mathcal{I}$ and an
assignment $\eta$.  From semantics of the second premise, if $\nu \in C$
and $(\nu, \omega) \in \rho_{\mathcal{I},\eta}(\alpha)$, then $\omega \in
C$.  From the semantics of the third premise, if $\omega \in C$, then
$\omega \in S$. From the semantics of the first premise, if $\nu \in I$ and
$\nu \notin C$, and $\omega$ is such that $(\nu, \omega) \in
\rho_{\mathcal{I},\eta}( (\alpha; ?\neg C)^* )$, then $\omega \in S$.  This
is equivalent to saying that for any $\omega$ such that there is a sequence
of states $\nu_0, \dots, \nu_n$, with $\nu_0 = \nu \in I \wedge \neg C$ and
$\nu_n = \omega$, $n \in \mathbb{N}$, and $(\nu_i, \nu_{i+1}) \in
\rho_{\mathcal{I},\eta}(\alpha; ?\neg C )$ for each $0 \le i \le n-1$, it
is the case that $\omega \in S$.

The proof is to show by induction that any state reachable by $\alpha^*$
from $I$ in $n \ge 0$ executions of $\alpha$ must be contained in $S$.  For
the base case, let $n = 0$. Then given $\nu \in I$, the only reachable
state by a sequence of length zero is $\nu$ itself. If $\nu \in C$, then
$\nu \in S$ by semantics of the third premise. If $\nu \notin C$, we have
that $(\nu, \nu) \in \rho_{\mathcal{I},\eta}( (\alpha; ?\neg C)^* )$ by a
chain of length zero, so that by semantics of the first premise, $\nu \in
S$.

As an inductive hypothesis, suppose that for every $\omega$ reachable by a
chain of length $n$, $\omega \in S$ (\ie, there exists $\nu_0, \dots,
\nu_n$ with $\nu_0 = \nu$ and $\omega = \nu_n$ such that $(\nu_i,
\nu_{i+1}) \in \rho_{\mathcal{I},\eta}(\alpha)$, for $0 \le i \le n-1$.
Now choose any state $\xi$ such that there is a chain of length $n+1$,
$\nu_0, \dots, \nu_{n+1}$ with $\nu_0 = \nu$ and $\nu_{n+1}=\xi$, such that
$(\nu_i, \nu_{i+1}) \in \rho_{\mathcal{I},\eta}(\alpha)$, for $0 \le i \le
n$).

First suppose that $\nu_n \in C$. Then by semantics of the second premise,
$\nu_{n+1} \in C$, and then $\nu_{n+1} \in S$ by semantics of the third
premise.  On the other hand, suppose $\nu_n \notin C$. We claim that for
all $j \le n$, $\nu_j \notin C$. To see this, note that if $\nu_j \in C$
for some $j \le n$, then $\nu_n \in C$ by semantics of the second premise,
which would contradict our assumption on $\nu_n$. Then we have that
$(\nu_i, \nu_{i+1}) \in \rho_{\mathcal{I},\eta}(\alpha; ?\neg C )$ for all
$0 \le i \le n$.  By semantics of the first premise, it follows that $\xi
\in S$.  This establishes the theorem.
\end{proof}

\subsection{Example}

For the running example, mode $q_1$ has a Lyapunov function of the form
$V_1(x_1, x_2)$ = $\frac{1}{2}x_1^2 + \frac{1}{2}(x_2 - 2)^2$ as described
in Example 4.10 of \cite{Khalil2002} (we discuss Lyapunov functions as
sources of invariants in \rref{sec:safetycertificates}). The sublevel set
$V_1(x_1, x_2) \le 5$ contains the reset into mode $q_1$. We apply the
forward invariant cut rule with $C_1 = V_1(x_1, x_2) \le 5 ~\wedge~M =
M_1$, a set that is invariant and safe, but not initialized since it does
not contain the initial mode $q_0$ of the hybrid system.  The rule
application causes the proof tree to split into three branches. The first
branch requires showing that whenever the system begins in $C_1$, it
remains in $C_1$. The only portions of the model that may run in this case
correspond to $q_1$ and the transition into the $\mathbf{fail}$ mode
(programs $\mathtt{m}_1$ and $\mathtt{s}_{0,1,2 \mapsto fail}$.  \keymaera
can readily check that since the proposed sublevel set excludes the guard
into $\mathbf{fail}$, $C_1$ will in fact be satisfied by the end of each
system trace. The second branch of the proof tree requires showing $C
\rightarrow S$, which is trivial, since $S$ is simply $M \ne M_{fail}$ and
$C$ stipulates $M = M_1$. We now turn our attention to the third branch.

Mode $q_2$ has a Lyapunov function $V(x_1, x_2)$ = $2 x_1^2 + 4 x_2^2$,
computed using standard Lyapunov techniques for linear systems. The
sublevel set $V_2(x_1, x_2) \le 16$ contains the circle of radius $2$; all
incoming transitions to mode $q_2$ make the system state to be reset to
somewhere within this circle.  By applying a forward invariant cut with
$C_2$ = $V_2(x_1, x_2) \le 16 ~\wedge~M = M_2$, we again get three
branches. As before, $C_2$ is invariant because the only portions of the
model that may run from $C_2$ are the programs $\mathtt{m}_2$ and
$\mathtt{s}_{0,1,2 \mapsto fail}$.  Since $V_2(x_1, x_2) \le 16$ excludes
the guard to $\mathtt{fail}$, \keymaera can show that $C_2$ represents a
safe set. The next branch is to prove that $C_2$ implies safety, which is
easy because $C_2$ requires $M = M_2$, which implies $M \neq M_{fail}$.

The third branch can now be easily proved with the standard tools of
\keymaera, using the loop invariant $M = M_0 ~\wedge~ x_1^2 + x_2^2 \le
10$.

\section{Obtaining safe forward invariants}\label{sec:safetycertificates}
This section describes various techniques to
generate safe forward invariants, which are invariant sets
that are safe but not necessarily initialized.
Let $\vx(t)$ denote any solution
trajectory for a given (hybrid) dynamical system.  A set $S$ is {\em
forward invariant} if for all $\vx(0) \in S$, for all $t$, $\vx(t) \in
S$. The general problem of identifying safe forward invariant sets that are useful is hard, but the techniques that we present can, in some cases, automatically identify safe forward invariant sets that can be used to complete safety proofs.

\subsection{Safe forward invariants based on Lyapunov analysis} \label{sec:safeinvs}

Lyapunov analysis provides one way to construct forward invariant sets for hybrid systems. We briefly review the basics of Lyapunov analysis to aid our
presentation.  Lyapunov's direct method is a well-known method used to
prove stability of dynamical systems within a region of interest. In
this method, the user provides a {\em local Lyapunov function}
$\lyap:X\to\Reals$ that over the domain of interest $X$ satisfies the
following properties:
\begin{enumerate}
\item Positive definiteness: for all $\vx$ in $X$, 
\begin{equation} \label{eq:lyapcond1}
\lyap(\vx) > 0,
\end{equation}
and $\lyap(\mathbf{0}) = 0$;

\item Derivative negative semidefiniteness: for all $\vx$ in $X$,
\begin{equation} \label{eq:lyapcond2}
\dot{\lyap}(\vx) = \frac{d}{dt}\lyap(\vx) \le 0,
\end{equation}
and $\dot{\lyap}(\mathbf{0}) = 0$.
\end{enumerate}

Existing techniques from dynamical systems theory use sum-of-squares
optimization \cite{parrilo2000SDP} and semidefinite programming
\cite{boyd94LMI,Boyd96} to identify {\em Lyapunov functions}
\cite{Meiss07} for systems described by polynomial differential
equations. A Lyapunov function $\lyap$ is analogous to a ranking function for
a discrete system, and it maps each continuous state $\vx$ to a positive
real number, with the property that along any system trajectory the
quantity $\lyap(\vx)$ monotonically decreases until it reaches $0$ at
the {\em equilibrium point}.  It is well-known that the {\em sublevel
set} of a Lyapunov function, $\levelset_\ell$ =
\mbox{$\setof{\vx|\lyap(\vx) \leq \ell}$} is a forward invariant set, \ie, given any initial condition in $\levelset_\ell$, all future
states remain in $\levelset_\ell$.  Thus, any sublevel set of a
Lyapunov function that includes the initial set and excludes the
unsafe set serves as a safety certificate
\cite{JohanssonRantzer1998,Prajna2005}.

\noindent {\em Remark:}
It is well known that for stable linear systems, a quadratic Lyapunov function
of the form $V = \vx^T P \vx$, where $P$ is a positive definite matrix, always
exists and can be computed by solving the matrix equation
\begin{equation} \label{eq:lyaplinear}
A^TP + P A = -Q
\end{equation}
where $Q$ is a positive definite matrix. Several scientific computing tools have
built-in commands to solve this equation, such as {\tt lyap} in \matlab and {\tt
LyapunovSolve} in \mathematica.

We now show how we can use Lyapunov-like functions to construct local certificates.

\mypara{Barrier Certificates} In the hybrid systems community, barrier
certificates have been
proposed as a Lyapunov-like analysis technique to prove that starting
from an initial set of states $X_0$, no system trajectory ever enters
an unsafe set $U$ \cite{Prajna2005,Prajna2006,prajna_safety_2004}.  The main step is to identify a barrier function
$B$ from the domain $X$ to $\Reals$, with the following properties:

\begin{eqnarray}\label{eq:barrier1}
\forall \vx\in X_0: B(\vx) \le 0  \\ \label{eq:barrier2}
\forall \vx\in U: B(\vx) > 0      \\ \label{eq:barrier3}
\forall \vx \in X\ \text{s.t.} B(\vx) = 0: \frac{\partial B}{\partial \vx}f(\vx) < 0.
\end{eqnarray}

Given a local Lyapunov function $\lyap$ valid in the domain $X$, 
if an $\ell$ can be selected such that (\ref{eq:barrier1}) and (\ref{eq:barrier2}) are satisfied, then $B(\vx)=\lyap(\vx)-\ell$
is a barrier certificate. This
follows from the definition of barrier certificates and the Lyapunov
conditions (\ref{eq:lyapcond1}) and (\ref{eq:lyapcond2}).

\mypara{Discovering Barrier Certificates} 
To discover barrier certificates, we employ a modification of a
technique from \cite{topcu08}, which uses concrete system
executions to generate a series of candidate Lyapunov functions.  Our
technique, which is based on \cite{Kapinski14},
 uses concrete executions to generate a set of linear
constraints.  A candidate Lyapunov function is then generated by
solving a linear program (LP) associated with the constraints.  A
series of candidates is iteratively improved upon, using a global
optimizer to search the region of interest for executions that violate
the condition (\ref{eq:lyapcond2}) for the given candidate.  The search is guided by
a cost function that is based on the Lie derivative of the candidate
Lyapunov function; if this cost function can be minimized below $0$,
then the minimizing argument provides a witness  (which we call a {\em
counterexample}) showing the candidate Lyapunov function is invalid.
Once such counterexamples are obtained, we include the associated
linear constraints in the LP problem and update the candidate Lyapunov  
function.  The process terminates when the global optimizer is unable
to find counterexamples to the candidate Lyapunov function. 
We then define the candidate barrier function $B(\vx)=V(\vx)-l$, where $l$ is selected such that 
(\ref{eq:barrier1}) and (\ref{eq:barrier2})
are satisfied.

Because there are no optimality guarantees from the global optimizer used to generate the candidate barrier function, the resulting candidate may not strictly satisfy the desired constraints. 
To check whether the candidates satisfy (\ref{eq:barrier1}) through (\ref{eq:barrier3}), we rely on a satisfiability modulo theories (SMT) solver that can handle nonlinear theories over the reals.
We use the dReal tool, which uses interval constraint
propagation (ICP) \cite{dReal}. dReal supports various nonlinear elementary functions in the framework of $\delta$-complete decision
procedures, and returns ``unsat'' or ``$\delta$-sat'' for a given query,
where $\delta$ is a precision value specified by the user. When
the answer is ``unsat'', dReal produces a proof of unsatisfiability; when it returns ``$\delta$-SAT'', it gives an interval of size $\delta$,
which contains points that may possibly satisfy the query.

When a ``$\delta$-SAT'' result is returned from a query to check (\ref{eq:barrier1}) through (\ref{eq:barrier3}), we do the following: 1.) construct a new linear constraint based on the interval returned, 2.)
add the new constraint to the existing set of linear constraints, and 3.) re-solve the LP to obtain an updated (improved) Lyapunov function candidate. If this process terminates, then the result is a barrier certificate.  

Our technique attempts to use discovered barrier certificates locally, that is, for each mode we attempt to construct a certificate that proves that the system will not leave the mode. If such a local barrier certificate is found, then the 
forward invariant cut rule can be applied to the mode to simplify the safety proof for the system, which may be composed of several modes.


\subsection{Other Techniques}

\mypara{Bounded-time Invariant Certificates}
Inspired by the success of reachability analysis using bounded model
checking for verifying software systems, there has been significant
research in estimating the reachable set of states for hybrid and
continuous-time dynamical systems.  See \cite{Frehse2011spaceex} and
references therein.  A common theme among various approaches is to
compute a {\em flowpipe}, or an overapproximation of the reachable
states over a bounded time horizon $\tau$. If the computed flowpipe
does not intersect with the unsafe set, then it is safe, and it is
invariant over bounded-time, as the initial states lie within it as wells as the set of all future states reachable within a fixed time bound also
lie within it.  The general form of a bounded-time invariant set is
$S_{reach} = (R(\vx) < 0) \wedge (t_l < t < t_u)$, where $R(\vx) < 0$
is some compact subset of the domain, and $(t_l,t_u)$ is the time
interval over which the set $R(\vx) < 0$ is invariant.

\mypara{Discrete Transition-based Certificates}
These certificates are useful to prove unreachability of certain modes
because of the transition structure of an underlying hybrid automaton.
Standard techniques from automata theory such as identifying strongly
connected components can be used to obtain such certificates.




\section{Case Studies}\label{sec:examples}
\newcommand{\startup}{\mathit{recovery}}
\newcommand{\normalmode}{\mathit{normal}}

\subsection{Non-autonomous Switched System} \label{sec:Academic2Dexample}


Consider an open two-mode system, where an external input can cause
the system to arbitrarily switch between the system modes.  This
example is significant, because neither of the two modes is invariant,
so the proof cannot rely on cutting out entire modes.  

The continuous dynamics are defined by matrices $A_1$ and $A_2$, as given below:
\[
\begin{array}{l}
\vspace{0.5em}
A_1 = \left[\begin{array}{ll}
        -1.0 & 4.0 \\
        -0.25 & -1.0 
        \end{array}\right], \\ 
A_2 = \left[\begin{array}{ll} 
         -1.0 & -0.25 \\
          4.0 & -1.0
          \end{array}\right].
\end{array}
\]
Linear reset maps are applied to the state when a transition is made between Modes 1 and 2. The resets are defined by matrices $R_{12}$ and
$R_{21}$:
\[
\begin{array}{l}
\vspace{0.5em}
R_{12} = \left[\begin{array}{ll} 
         -0.0658 & -0.0123 \\
          0.1965 & -0.0658 
         \end{array}\right], \\
R_{21} = \left[\begin{array}{ll} 
         -0.0658 & 0.1965 \\
         -0.0123 & -0.0658   
         \end{array}\right] .
\end{array}
\]

\newcommand{\bb}{\hspace{-0.2em}}
\begin{figure}[t]
\centering
\begin{tikzpicture}

\tikzstyle{smalltext}=[font=\fontsize{9}{9}\selectfont]
\node[rectangle,draw,rounded corners,smalltext] (s1) {$\dot{\vx} = A_1\vx$};
\node[rectangle,draw,rounded corners,right of=s1,node distance=40mm,smalltext] (s2) {$\dot{\vx} = A_2\vx$};
\node[coordinate,above of=s1,node distance=12mm] (c) {};
\node[coordinate,left of=c,node distance=12mm] (c1) {};
\draw[>=stealth,->] (c1) to 
                   node[smalltext,right] {$\vx \coloneqq R_{21}\vx$} 
                    (s1.north west); 
\draw[>=stealth,->] (s1.north east) to[out=30,in=150] 
                    node[smalltext,below] {$\vx \coloneqq R_{12} \vx$}
                    (s2.north west);
\draw[>=stealth,->] (s2.south west) to[out=210,in=330]
                    node[smalltext,above] {$\vx \coloneqq R_{21} \vx$}
                    (s1.south east);
\end{tikzpicture}
\caption{Hybrid automaton for the nonautonomous switched system.\label{fig:hans}}
\end{figure}

Figure \ref{fig:hans} shows a hybrid automaton for the system, and Model \ref{mod:ts} defines the corresponding hybrid program. For both modes, the continuous-time 
dynamics given by $A_1$ and $A_2$ are stable and linear.
It is well known that even for switched-mode systems with stable
linear continuous dynamics, switching conditions exists that lead to
instability for the switched system \cite{branicky1994}.  We wish to
prove that it is not possible to switch between $\vA_1$ and $\vA_2$ to
create unstable behavior.  The safety property for this system is
that it should remain within $\| \vx \|_\infty<2.0$. We apply the
forward invariant cut rule to the example to successfully prove the
safety property.  Below, we describe the steps of the proof.

\SetAlgorithmName{Model}{model}{List of models}
\DontPrintSemicolon
\begin{algorithm}[t!]
\small{
\makebox[5em][r]{TS} $\equiv$ $I \rightarrow [\left(\mathtt{s} \cup \mathtt{m}_1\; \cup \mathtt{m}_2 \right)^*] S$
\nllabel{line:ts_overview} \;

\makebox[5em][r]{I} $\equiv$ $M = 1 \wedge x_1^2 + x_2^2 \le 0.49$
\nllabel{line:ts_init}\;
\makebox[5em][r]{$\mathtt{s}$} $\equiv$ $ M \coloneqq 1 \cup M \coloneqq 2$
\nllabel{line:ts_switching}\;
\makebox[5em][r]{$\mathtt{m}_1$} $\equiv$ $(?M = 1);$
				\nllabel{line:ts_checkm1}\;
	\makebox[7em][r]{}$x_1 \coloneqq -0.0658 x_1+0.1965 x_2;$
				\nllabel{line:ts_reset_m1_1}\;
	\makebox[7em][r]{}$x_2 \coloneqq  -0.0123 x_1-0.0658 x_2;$
				\nllabel{line:ts_reset_m1_2}\;
	\makebox[7em][r]{}$\{x_1' = -x_1 + 4 x_2, x_2' = -(1/4) x_1-x_2\}$
				\nllabel{line:ts_dynamics_m1}\;

\makebox[5em][r]{$\mathtt{m}_2$} $\equiv$ $(?M = 2);$
				\nllabel{line:ts_checkm2}\;
	\makebox[7em][r]{}$x_1 \coloneqq -0.0658 x_1-0.0123 x_2 $
				\nllabel{line:ts_reset_m2_1}\;
	\makebox[7em][r]{}$x_2 \coloneqq 0.1965 x_1-0.0658 x_2 $
				\nllabel{line:ts_reset_m2_2}\;
	\makebox[7em][r]{}$\{x_1' = -x_1-(1/4) x_2, x_2' = 4 x_1-x_2 \}$
				\nllabel{line:ts_dynamics_m2}\;
\makebox[5em][r]{S} $\equiv$ $x_1 > -2 \wedge x_1 < 2 \wedge x_2 > -2 \wedge x_2 < -2$
	\nllabel{line:ts_safety}
\caption{A d$\mathcal{L}$ model of the nonautonomous switched system\label{mod:ts}}
}
\end{algorithm}

Here, the designer provided two forward invariants of the system by
independently solving the Lyapunov equation (\ref{eq:lyaplinear}) for
the linear dynamics of the system in each of the modes. The designer
then picked level set sizes to ensure that the resulting forward
invariant is contained within the safe set $S$. The invariants are
given below:

\begin{eqnarray}
C_1 = \{ \vx \mid \lyap_1(\vx) < l_1 \} 
\label{eq:first_cut} \\
C_2 = \{ \vx \mid \lyap_2(\vx) < l_2 \}
\label{eq:second_cut}
\end{eqnarray}

Here, $\lyap_1(\vx) = 0.3828 x_1^2 + 0.9375 x_1 x_2+2.3750 x_2^2$, and
$l_1 =  1.0$, and $\lyap_2(\vx) = 2.3750 x_1^2 + 0.9375 x_1 x_2 + 0.3828
x_2^2$, and $l_2 = 1.0$.

We sequentially apply two forward invariant cuts in order to prove
Model~\ref{mod:ts} safe. The first forward invariant cut rule uses the set
$C_1$ as the cut. After applying
$C_1$, the proof tree has three branches: $I \wedge \neg C_1 \rightarrow
[(\alpha; ?\neg C_1)^*] S$, $C_1
\rightarrow [\alpha] C_1$, and $C_1 \rightarrow S$. Of these, the third
branch is trivially true as $C_1 \subseteq S$. To prove the second branch
valid, \keymaera needs to prove that $C_1$ is invariant for the disjuncts.

For the hybrid program $\mathtt{m}_1$, \keymaera computes the forward image
of the set $C_1$ when transformed by the linear transformation $R_{21}$,
\ie, the set $F = \{ \vy \mid \vy = R_{21} \vx \wedge \lyap_1(\vx) < l_1
\}$. Note that this step requires performing quantifier elimination, and
\keymaera utilizes Mathematica for this purpose.  It then uses $C_1$ as a
differential invariant to prove that $F \rightarrow [\{\vx'= A_1 \vx\}]
C_1$.  This is facilitated by the fact that $C_1$ is in fact invariant for
the linear system $\dot{\vx} = A_1\vx$.

The difficult branch is the one requiring us to prove that $C_1$ is
invariant for mode $\mathtt{m}_2$. To do so, we assist \keymaera with
certain lemmas; the intuition for these lemmas is as follows:  Any state in
set $C_1$ upon executing the program $\mathtt{m}_2$ is linearly
transformed by $R_{12}$.  Let $\hat{C}_1 = \{\vxhat \mid \vx\in C_1
\wedge \vxhat
= R_{12} \vx\}$ represent the forward image of $C_1$ under $R_{12}$.  Next,
we show that the set $\hat{C}_1$ is a subset of a specific sublevel set
$C_2^*$ of $\lyap_2(\vx)$.  As $C_2^*$ is a sublevel set of $\lyap_2(\vx)$,
it is invariant under the dynamics $\dot{\vx} = A_2 \vx$; thus, any state
beginning in $C_2^*$ will remain in $C_2^*$. Finally, we choose $C_2^*$ in
such a way that $C_2^* \subseteq C_1$. This essentially proves that any
state starting in the set $C_1$ will be contained in set $\hat{C}_1$, of
which any state will under the dynamics $\dot{\vx} = A_2 \vx$ remain in the
set $C_2^*$, \ie, in the state $C_1$. 

Formally, we establish the following:
\begin{eqnarray} 
        C_1 \rightarrow [\vx := R_{12}\vx] \hat{C}_1 \\
        \hat{C}_1 \subseteq C_2^* \\
        C_2^* \rightarrow [\{\vx' = A_2\vx\}] C_2^* \\
        C_2^* \subseteq C_1
\end{eqnarray}

We can combine these to infer that $C_1 \rightarrow [\mathtt{m}_2] C_1$.

Finally, the first branch of the proof considers $I \wedge \neg C_1$; this
contains the set of initial states not in $C_1$. These can now be addressed
by the second forward invariant cut (set $C_2$) following a symmetric
argument as above. After applying the second cut $C_2$, the first branch
has an empty antecedent ($I \wedge \neg C_1 \wedge \neg C_2$ is empty),
\ie, the proof has accounted for all initial states, which closes the
proof. The sets we have discussed are shown in Figure
\ref{fig:switchedmode}.

\begin{figure}
\begin{center}
\includegraphics[clip=true,trim=0in 0in 0in 0in, width=3.55in]{./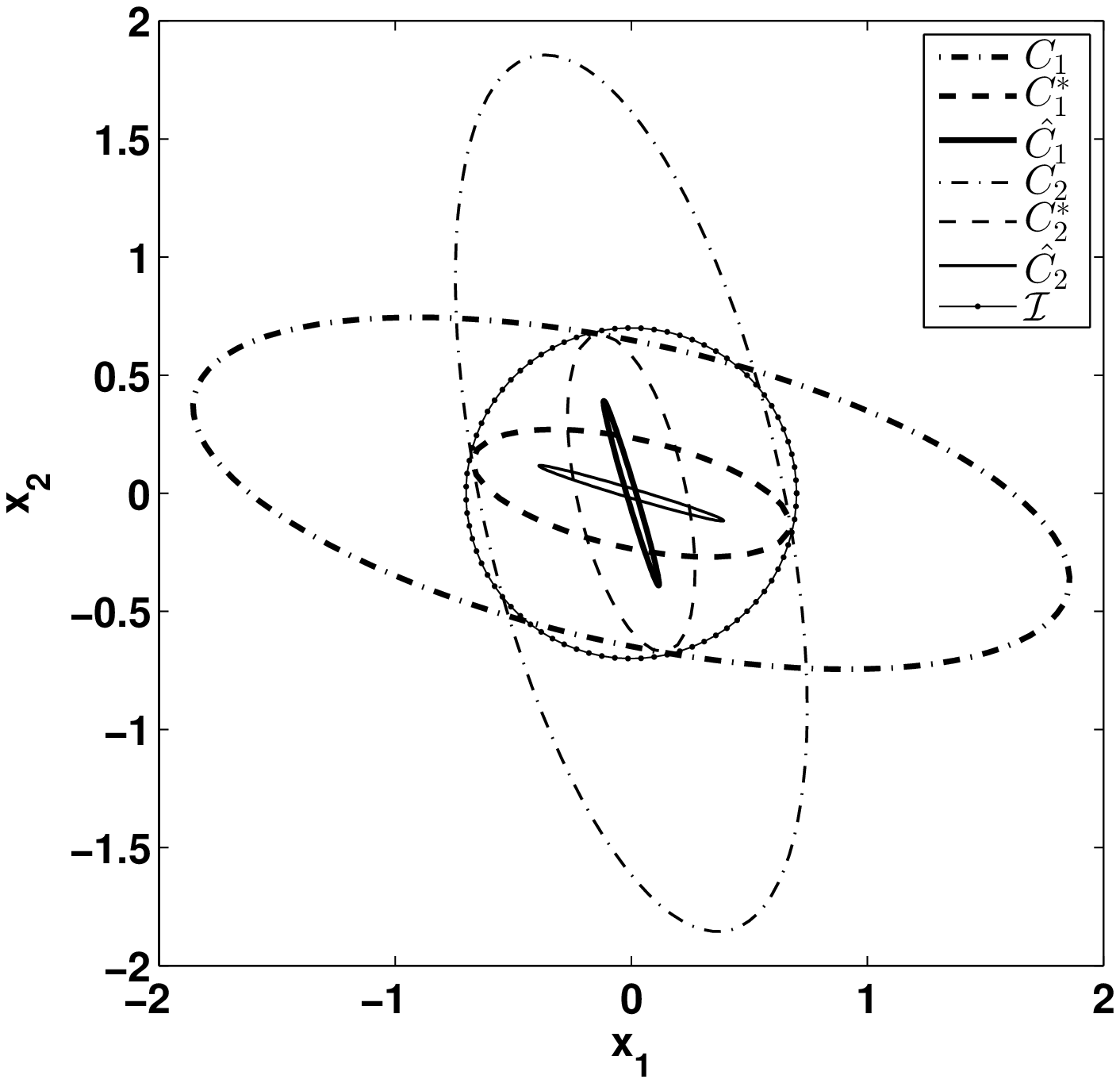}
\caption{Illustration of forward invariant sets for Example \ref{sec:Academic2Dexample}. \label{fig:switchedmode}}
\end{center}
\end{figure}


\subsection{Engine fuel control}\label{sec:AFCexample}

\mypara{Model} Our second case study is a hybrid system representing an
automotive fuel control application.   Environmental concerns and
government legislation require that the fuel economy be maximized and the
exhaust gas emissions (e.g., hydrocarbons, carbon monoxide, and nitrogen
oxides) be minimized.  At the ideal air-to-fuel (A/F) ratio, also known as
the {\em stoichiometric} value, both these quantities are optimized.  We
present an automotive control system whose purpose is to accurately
regulate the A/F ratio.

The system dynamics and parameters were derived from a published model
\cite{Jin14} and then simplified, as in \cite{Kapinski14}.  The model
consists of a simplified version of the physics of engine subsystems
responsible for air intake and A/F ratio measurement, along with a computer
control system tasked with regulating the A/F ratio.  The objective of the
controller is to maintain the A/F ratio within 10\% of the nominal
operating conditions.  The experiment that we model involves an engine
connected to a dynamometer -- a device that can control the speed of the
engine and measure the output torque. In our setting, the dynamometer
maintains the engine at a constant rotational velocity.  The controller has
two modes of operation: (1) a $\startup$ mode, which controls fuel in an
open-loop manner, \ie, with only feedforward control action, where the
system runs for at most $8$ms, and (2) a normal run mode, which uses
feedback control to regulate the A/F ratio.  

The controller measures both the air flow through the air-intake manifold,
which it uses to estimate the air pressure in the manifold, and the oxygen
content of the exhaust gas, which it uses to compute the A/F ratio.  The
$\startup$ mode represents the behavior of the controller when recovering
from a sensor fault (e.g., aberrant sensor readings, environmental
conditions that cause suspicion of the sensor readings). During the
$\startup$ mode, the controller has no access to oxygen sensor measurements
and so must operate in a feedforward manner (i.e., using only the manifold
air flow rate). The normal mode is the typical mode of operation, where the
oxygen sensor measurements are used to do feedback control.

Model \ref{mod:efc_hybridprogram} is a hybrid program representing this
system. The ODEs representing the continuous dynamics in each mode and
the model parameters are presented in the Appendix. The state variables
$\phat,\rhat,\pesthat$, and $\ihat$ represent the manifold pressure, the
ratio between actual air-fuel ratio and the stoichiometric value, the
controller estimate of the manifold pressure, and the internal state of
the PI controller; these variables have all been translated so that the
equilibrium point coincides with the origin. In the $\startup$ mode, the
continuous-time state $\vx$ is the tuple
$(\phat,\rhat,\pesthat,\ihat,\tau)$.  The additional state variable in
the $\startup$ mode represents the state of a timer that evolves
according to the ODE $\dot{\tau} = 1$.  In the $\normalmode$ mode, the
state is given by $(\phat,\rhat,\pesthat,\ihat)$.

We assume the system is within $1.0\%$ of the nominal value at the
initialization of the $\startup$ mode. This represents the case where the
system was previously in a mode of operation that accurately regulated the
A/F ratio to the desired setpoint.  A domain of interest for the state
variables is given by $\|\vx \|_\infty<0.2$.

\SetAlgorithmName{Model}{model}{List of models}
\DontPrintSemicolon
\begin{algorithm}[t!]
\small{
\makebox[5em][r]{EFC} $\equiv$ $I \rightarrow [\left(\mathtt{m}_1
                                                              \cup \mathtt{m}_2
                                                              \cup \mathtt{s}_{1 \mapsto 2}
                                                              \cup \mathtt{s}_{\{1,2\} \mapsto fail}
                                                              \cup \mathtt{m}_{fail}
                                                              \right)^*] S$ \nllabel{line1}
                                                              \;
\makebox[5em][r]{$I$} $\equiv$ $(-0.001 \le p \le 0.001)~$\;
\makebox[6em][r]{} $\wedge~(-0.001 \le r \le 0.001)~\wedge$ \;
\makebox[6em][r]{} $\wedge~(p_{est} =  0 \wedge i = 0 \wedge M = 1)$ \;
\makebox[5em][r]{$\mathtt{m}_1$} $\equiv$ $(?M = \startup; ?\tau \le 0.008;$ \;
\makebox[6em][r]{} $\{\exists \ell_1.\exists \ell_2. \exists \ell_3$\;
\makebox[7em][r]{} $(-0.86 \le \ell_1 \le 0.74)$\;
\makebox[7em][r]{} $(-0.17 \le \ell_2 \le 0.18)$\;
\makebox[7em][r]{} $(-0.81 \le \ell_3 \le 0.68)$\;
\makebox[7em][r]{} $\wedge~(p' = \ell_1)~\wedge~(r' = \ell_2)~\wedge~(p_{est}' =
\ell_3) ~\wedge$ \nllabel{line3} \;
\makebox[7em][r]{} $i' = 0 ~\&~ \tau' = 1 ~\wedge~ \tau \le 0.008\}$\;
\makebox[5em][r]{$\mathtt{s}_{1 \mapsto 2}$} $\equiv$ $(?M = \startup;?\tau \ge 0.008;$ \;
\makebox[6.5em][r]{} $M \coloneqq \normalmode;)$\;
\makebox[5em][r]{$\mathtt{m}_2$} $\equiv$ $(?M = \normalmode;$ \nllabel{line:m2}\;
\makebox[7em][r]{}$\{ p' = f_p,$\;
\makebox[7em]{}$r' = f_r,$\;
\makebox[7em]{}$p_{est}' = f_{p_{est}},$\;
\makebox[7em]{}$i' = f_i,$\;
\makebox[7em]{}$\& -0.02 \le p \le 0.02~\wedge~-0.02 \le r \le 0.02$\;
\makebox[7em]{}$\wedge -0.02 \le p_{est} \le 0.02~\wedge~-0.02 \le i \le 0.02\}$\;
\makebox[5em][r]{$\mathtt{s}_{\{1,2\}\mapsto fail}$} $\equiv$ $(?(r < -0.1 \vee r > 0.1);$ \nllabel{line:s12fail}\;
\makebox[7em][r]{} $M \coloneqq fail)$ \;
\makebox[5em][r]{$\mathtt{m}_{fail}$} $\equiv$ $(?(r < -0.1 \vee r > 0.1 );$ \nllabel{line:mfail}\;
\makebox[7em][r]{} $M \coloneqq fail)$ \;
\makebox[5em]{$S$} $\equiv$ $M \ne fail$ \;
\caption{A d$\mathcal{L}$ model of a closed-loop fuel control system\label{mod:efc_hybridprogram}}
}
\end{algorithm}

\mypara{Safety proof using forward invariant cut} The verification goal is
to ensure that in the given experimental setting, the system always remains
within 10\% of the nominal A/F ratio after a fixed recovery time of $0.8$
ms has passed. In other words, we wish to show that the system begins in
the $\startup$ mode, with the initial set of continuous states defined by
$\mathtt{init} = \{ \vx \mid \|\vx\|_\infty < 0.01 \}$; the system
transitions to the $\normalmode$ mode after at most $8.0$ ms; and the
system never transitions to the unsafe set, where $|r| > 0.1$, within the
domain of interest $\|\vx \|_\infty<0.2$.  

%


In previous work \cite{Kapinski14}, the authors had established a forward
invariant set for the $\normalmode$ mode of operation using a barrier
certificate formulation. The authors formulated the barrier certificate
using simulation-guided techniques to obtain a candidate Lyapunov function
$\lyap$ and a number $\ell$ to propose a barrier function of the form
$B(\vx) = \lyap(\vx) - \ell$. Here, $\lyap(\vx)$ = $\vz^T\vP\vz$, and $\vz$
is a vector of all monomials of degree $\le 2$ of the state variables
$\phat$, $\rhat$, $\pesthat$ and $\ihat$.  Note that $\vz$ thus contains
$14$ monomials, and $\vP$ is a $14\mathrm{x}14$ matrix.  We omit the
resulting $\vP$ matrix for brevity.

We use the set enclosed by the barrier function to formulate the forward
invariant cut
\begin{equation}\label{eq:hybridcut}
C \equiv (M = \normalmode) \wedge (B(\vx) \le 0).
\end{equation}

Application of the forward invariant cut inference rule
\eqref{eq:hybrid_cut}, generates three proof obligations that \keymaera has
to discharge.

\mypara{Obligation 1} $C \rightarrow [\alpha] C$ \\ Note that once we
define $C$, the hybrid programs $\mathtt{m}_1$, $s_{1 \mapsto 2}$ can be
excised by \keymaera, as both have the hybrid program $?M=\startup$ as
their first item, which is inconsistent with $C$. Thus, \keymaera can then
focus on proving this obligation only for the programs $\mathtt{m}_2$,
$\mathtt{s}_{\{1,2\} \mapsto fail}$ and $\mathtt{m}_{fail}$.

In order to discharge the obligation for the program $\mathtt{m}_2$, we
first perform some trivial simplifications with \keymaera that leaves us
with the following proof goal:

\begin{equation}
\label{eq:prebarr}
(B(\vx) \le 0) \rightarrow [ \{ \vx' = f(\vx) \& H \} ] (B(\vx) \le 0) \wedge (M = \normalmode) 
\end{equation}

To discharge \eqref{eq:prebarr}, we can use the barrier certificate rule
shown in \eqref{eq:keymaera_barrier} that we have added to KeYmaera's proof
calculus.

\begin{equation}\label{eq:keymaera_barrier}
\frac{ \mathit{init}\!\rightarrow\! B(\vx)\!\le\!0\quad  %
       B(\vx)\!=\!0\!\rightarrow\!\frac{\partial B}{\partial \vx}\!\!\cdot\!\! f(\vx)\! <\!0  \quad %
       B(\vx)\! \le\! 0\!\rightarrow\! \mathit{safe} %
}{ %
\mathit{init} \rightarrow[ \{ x' = f(\vx) \}  ] \mathit{safe}
}
\end{equation}
where $H$ is the domain of evolution of the continuous dynamics.
In our application of the barrier certificate rule, we substitute
$\mathit{init}$ with $(B(\vx) \le 0)$ and $\mathit{safe}$ with $(B(\vx) \le
0) \wedge (M = \normalmode)$.  The first and the third proof obligations in
the barrier certificate rule are then trivially satisfied.  For the
remaining (middle) proof obligation \keymaera uses the SMT solver dReal
\cite{dReal}. In particular, it asks dReal if the query $(B(\vx) = 0)$
$\wedge$ \mbox{$(\frac{\partial B}{\partial\vx}\cdot f_{normal}(\vx) >
-\epsilon)$} is unsatisfiable, where $\epsilon$ is a small positive number.


In order to discharge the proof obligation for $\mathtt{m}_2$,
$\mathtt{s}_{\{1,2\} \mapsto fail}$, \keymaera needs to show that
if $B(\vx) < 0$ holds, either of these programs cannot invalidate $C$
by transitioning to mode $fail$. It proves this by showing that
the set $B(\vx) < 0$  is a subset of the safe set using dReal.

\mypara{Obligation 2} $C \rightarrow S$ \\ This obligation is
trivial as $S$ requires the mode to be $fail$, while $C$ says that the
mode is $\normalmode$ mode.

\mypara{Obligation 3} $I \wedge \neg C \rightarrow$ $[(\alpha;?\neg C)^*] S$  \\

To prove this obligation, we use the lemma that the set $C1$ is an
invariant for all states remaining in $I \wedge \neg C$ . This is a
bounded-time invariant certificate.

\begin{align}
C1 \equiv (M \neq \mathtt{m}_{fail} ) \wedge (0 \le \tau \le 0.008) \wedge (x \in S_{reach}) \label{eq:hybridcut2}
\end{align}

Here $S_{reach}$ is an overapproximation of reachable sets by using upper
and lower bounds on $\dot{p}$ and $\dot{r}$ computed using dReal. The proof
for this branch continues using standard KeYmaera deduction procedures.
There is one additional barrier certificate application to show that the
normal mode, when starting from this set, lands within the barrier
certificate and therefore also respects this invariant. This requires a
derivative negativity argument, which \keymaera again handles via an
external dReal query.



\section{Related Work and Conclusions}\label{sec:conclusions}
\mypara{Lazy abstraction} In software verification using conservative
abstractions, an abstract program can be viewed as a proof of program
correctness if it satisfies the correctness property of interest. A
popular paradigm is that of {\em lazy abstraction}
\cite{henzinger2002lazy}, where the abstract program is not derived
from a global set of predicates, but is an abstract model in which
predicates change from state to state. Such an abstraction is obtained
through the process of lazy refinement, where abstraction is done
on-the-fly with a goal of eliminating local spurious
counterexamples. While the exact mechanics of our technique are
different, our technique also
generates correctness proofs consisting of lazily generated local
invariants.

\mypara{Logical cuts} In classical logic, a cut serves the role of a
lemma.  In Gentzen's sequent calculus
\cite{GentzenSequentCalculus}, the cut
rule splits the proof tree into two branches, one in which the lemma
can be used as an assumption, and another in which it must be proved.
The cut-elimination theorem, states that
any proof of the sequent calculus that uses the cut rule has another
proof that does not use the cut rule.  Ideas similar to Gentzen's cut
rule have been developed for other reasoning frameworks.  Craig
interpolants \cite{Craig1957} have been used to compute cuts in
frameworks that leverage first-order logic, and they have been used
successfully in a model checking framework \cite{mcmillan2006lazy}.
The differential cut rule of d$\mathcal{L}$ makes it possible to introduce lemmas
about the continuous evolution of differential equations.
It has been shown that there are theorems that cannot be proved without
differential cuts, i.e., the differential cut strictly adds deductive
power
\cite{DBLP:journals/lmcs/Platzer12}.
Overall, the approach provides an iterative method to find
a safety certificate, by proposing sets that are initialized and
invariant, and repeating the differential cut procedure until safety
can be proved.
This work proposes a forward invariant cut rule, in which a lemma is proved about
the evolution of a hybrid system model.  The proof rule requires showing
that a certain set is safe and invariant, and allows the proof to continue
for the behaviors that are not initialized within the set. The forward
invariant
cut may be repeated, until a proof of overall system safety is attained.
Most crucially, the proposed cuts allow the verification process
to leverage a designer's knowledge of local system properties.

\mypara{Deductive Proof System for Temporal Logic} In
\cite{dimitrova2014deductive}, the authors present a deductive proof
system for proving alternating-time temporal logic assertions on a
continuous dynamical system. Some of the proof rules presented require
the user to provide auxilary predicates to establish proof-subgoals.
These predicates are essentially logical cuts, and in particular can
be barrier certificates. The key feature of our approach is that we
provide an automated mechanism to leverage user insight about parts of
the system to obtain localized forward invariant cuts. It would be
interesting to see if the automation that we develop in this paper
could be used to mechanize the proof system presented in
\cite{dimitrova2014deductive}.

\mypara{Conclusions}
This paper presents a method to leverage knowledge of local
system behavior within a deductive framework. In this framework,
designer knowledge of system behavior can be leveraged lazily
as part of a proof of global system safety. The designer
proposes sets that are invariant and safe, which allows
certifying the safety of some region of state space.
In future work, we would like to investigate the use of
sets that are safe, but not initialized or invariant, as part
of a proof effort. An example of this is when a collection of modes
have continuous barriers that the differential equations may not cross,
but the set is not invariant because there are outgoing transitions
that are not excluded by the set.


\bibliographystyle{splncs03}
\bibliography{bibliography}

\begin{thebibliography}{10}
\providecommand{\url}[1]{\texttt{#1}}
\providecommand{\urlprefix}{URL }

\bibitem{PVS2}
Abraham-Mumm, E., Hannemann, U., Steffen, M.: Verification of hybrid systems:
  formalization and proof rules in pvs. In: Engineering of Complex Computer
  Systems, 2001. Proceedings. Seventh IEEE International Conference on. pp.
  48--57 (2001)

\bibitem{PVS1}
Abraham-Mumm, E., Hannemann, U., Steffen, M.: Assertion-based analysis of
  hybrid systems with pvs. In: Moreno-Diaz, R., Buchberger, B., Luis~Freire, J.
  (eds.) Computer Aided Systems Theory, EUROCAST 2001, Lecture Notes in
  Computer Science, vol. 2178, pp. 94--109. Springer Berlin Heidelberg (2001),
  \url{http://dx.doi.org/10.1007/3-540-45654-6_8}

\bibitem{beckert2007verification}
Beckert, B., H{\"a}hnle, R., Schmitt, P.H.: {Verification of object-oriented
  software: The KeY approach}. Springer-Verlag (2007)

\bibitem{boyd94LMI}
Boyd, S., Ghaoui, L.E., Feron, E., Balakrishnan, V.: {Linear Matrix
  Inequalities in System and Control Theory}, vol.~15. SIAM (1994)

\bibitem{branicky1994}
Branicky, M.S.: Stability of switched and hybrid systems. In: Decision and
  Control, 1994., Proceedings of the 33rd IEEE Conference on. vol.~4, pp.
  3498--3503. IEEE (1994)

\bibitem{chen2013}
Chen, X., Abraham, E., Sankaranarayanan, S.: {Flow*: An Analyzer for Non-Linear
  Hybrid Systems}. In: CAV (2013)

\bibitem{Craig1957}
Craig, W.: {Three uses of the Herbrand-Gentzen theorem in relating model theory
  and proof theory}. J. Sym. Logic  3,  269--285 (1957)

\bibitem{dimitrova2014deductive}
Dimitrova, R., Majumdar, R.: Deductive control synthesis for alternating-time
  logics. In: Proc. of the International Conference on Embedded Software. p.~14
  (2014)

\bibitem{Frehse2008phaver}
Frehse, G.: {PHAVer: Algorithmic verification of hybrid systems past HyTech}.
  STTT  10(3),  263--279 (2008)

\bibitem{Frehse2011spaceex}
Frehse, G., Le~Guernic, C., Donz{\'e}, A., Cotton, S., Ray, R., Lebeltel, O.,
  Ripado, R., Girard, A., Dang, T., Maler, O.: {SpaceEx: Scalable verification
  of hybrid systems}. In: CAV. pp. 379--395 (2011)

\bibitem{dReal}
Gao, S., Avigad, J., Clarke, E.M.: $\delta$-complete decision procedures for
  satisfiability over the reals. In: J. Automated Reasoning. pp. 286--300
  (2012)

\bibitem{GentzenSequentCalculus}
Gentzen, G.: {Untersuchungen über das logische Schlie{\ss}en I,II}.
  Mathematische Zeitschrift 39(2), 39(3): 176--210, 405--431 (1935)

\bibitem{henzinger2002lazy}
Henzinger, T.A., Jhala, R., Majumdar, R., Sutre, G.: {Lazy abstraction}. In:
  ACM SIGPLAN Notices. vol.~37, pp. 58--70 (2002)

\bibitem{Henzinger1998}
Henzinger, T.A., Kopke, P.W., Puri, A., Varaiya, P.: {What's Decidable about
  Hybrid Automata?} JCSS  57(1),  94 -- 124 (1998)

\bibitem{Jin14}
Jin, X., Deshmukh, J.V., Kapinski, J., Ueda, K., Butts, K.: {Powertrain Control
  Verification Benchmark}. In: Hybrid Systems: Computation and Control (2014)

\bibitem{JohanssonRantzer1998}
Johansson, M., Rantzer, A.: Computation of piecewise quadratic {Lyapunov}
  functions for hybrid systems. IEEE Tran. on Automatic Control  43(4),
  555--559 (1998)

\bibitem{Kapinski14}
Kapinski, J., Deshmukh, J.V., Sankaranarayanan, S., Ar\'{e}chiga, N.:
  Simulation-guided {Lyapunov} analysis for hybrid dynamical systems. In:
  Hybrid Systems: Computation and Control (2014)

\bibitem{kapinski2004verifying}
Kapinski, J., Krogh, B.H.: Verifying asymptotic bounds for discrete-time
  sliding mode systems with disturbance inputs. In: ACC. vol.~3, pp. 2852--2857
  (2004)

\bibitem{Khalil2002}
Khalil, H.K.: {Nonlinear Systems}. Prentice Hall (2002)

\bibitem{mcmillan2006lazy}
McMillan, K.L.: {Lazy abstraction with interpolants}. In: Computer Aided
  Verification. pp. 123--136 (2006)

\bibitem{Meiss07}
Meiss, J.D.: Differential Dynamical Systems (Monographs on Mathematical
  Modeling and Computation). SIAM (2007)

\bibitem{parrilo2000SDP}
Parrilo, P.A.: {Structured Semidefinite Programs and Semialgebraic Geometry
  Methods in Robustness and Optimization}. Ph.D. thesis, California Institute
  of Technology (2000)

\bibitem{PlatzerLAHS}
Platzer, A.: {Logical Analysis of Hybrid Systems}. Springer (2010)

\bibitem{DBLP:journals/lmcs/Platzer12}
Platzer, A.: The structure of differential invariants and differential cut
  elimination. Logical Methods in Computer Science  8(4),  1--38 (2012)

\bibitem{PlatzerClarke2008}
Platzer, A., Clarke, E.M.: {Computing differential invariants of hybrid systems
  as fixedpoints.} In: Computer Aided Verification (2008)

\bibitem{Prajna2005}
Prajna, S.: {Optimization-based methods for nonlinear and hybrid systems
  verification}. Ph.D. thesis, California Institute of Technology, Caltech,
  Pasadena, CA, USA (2005)

\bibitem{Prajna2006}
Prajna, S.: {Barrier certificates for nonlinear model validation}. Automatica
  42(1),  117--126 (2006)

\bibitem{prajna_safety_2004}
Prajna, S., Jadbabaie, A.: {Safety Verification of Hybrid Systems Using Barrier
  Certificates}. In: Hybrid Systems: Computation and Control. pp. 477--492
  (2004)

\bibitem{topcu08}
Topcu, U., Seiler, P., Packard, A.: {Local stability analysis using simulations
  and sum-of-squares programming}. Automatica  44,  2669--2675 (2008)

\bibitem{Boyd96}
Vandenberghe, L., Boyd, S.: {Semidefinite Programming}. SIAM Review  38(1),
  49--95 (March 1996)

\end{thebibliography}

\newpage
\clearpage

\appendix

\section*{Appendix}\label{sec:appendix}
\renewcommand{\thesubsection}{\Alph{subsection}}
\subsection{Semantics of d$\mathcal{L}$}
We follow the development of \cite{PlatzerLAHS}, Chapter 2.
Symbols in d$\mathcal{L}$ are classified into three
different syntactic categories, depending on their role.
\begin{enumerate}
\item $\Sigma_r$ represents a set of {\em rigid} symbols
that cannot change their value, such as $0, 1, +, \cdot$;
\item $\Sigma_{fl}$ represents a set of {\em flexible} symbols,
also called {\em state variables}, which change their value
as the system evolves;
\item $V$ represents a set of {\em logical variables}, which
do not change as the system evolves, but can be quantified over
universally and existentially; they often serve the role of
parameters.
\end{enumerate}
An {\em interpretation} is a function
$\mathcal{I}$ that associates
functions and relations over the reals
to function and relation symbols
in $\Sigma_r$. The standard arithmetic operators and relations
symbols,
such as $+$, $\cdot$, $\ge$,
are interpreted as usual.
A \emph{state} is a map $\nu: \Sigma_{fl} \mapsto \mathbb{R}$,
which maps a real value to each state variable.
An assignment $\eta: V \mapsto \mathbb{R}$ is a map that
prescribes the value of the logical variables. Note that
the value of the logical variables does not depend on the
state. 

A state variable is a {\em term}, and a logical variable is also 
a term.
The result of applying a function of arity $n$ to $n$ terms
is also a term. Nothing else is a term.

\begin{definition}[Valuation of terms (\cite{PlatzerLAHS}, Defn. 2.5)]
The \emph{valuation of terms} with respect to interpretation
$\mathcal{I}$, assignment $\eta$, and state $\nu$ is defined as
\begin{enumerate}
\item $val_{\mathcal{I}, \eta}(\nu, p) = \eta(p)$ if $p$ is a 
logical variable.
\item $val_{\mathcal{I}, \eta}(\nu, x) = \nu(x)$ if $x$ is a state
variable.
\item $val_{\mathcal{I}, \eta}(\nu, f(\theta_1, \dots, \theta_n)) = 
\mathcal{I}(f)( val_{\mathcal{I}, \eta}(\theta_1), \dots,
val_{\mathcal{I}, \eta}(\theta_2))$ if $f$ is a function of arity
$n \ge 0$ and $\theta_1, \dots, \theta_n$ are terms.
\end{enumerate}
\end{definition}

The notation $\eta[x \mapsto d]$ represents the function that
agrees with $\eta$ except for the interpretation of $x$, where it takes
the value $d$.
The notation $\nu[x \mapsto d]$ denotes the modification of a state
$\nu$, that agrees with $\nu$ everywhere except the interpretation
of the state variable $x$, where it takes the value $d$.

\begin{definition}[Valuation of d$\mathcal{L}$ formulas (\cite{PlatzerLAHS}, Defn. 2.6]
The valuation $val_{\mathcal{I}, \eta}(\nu, \cdot)$
of formulas with respect to interpretation $\mathcal{I}$,
assignment $\eta$, and state $\nu$ is defined as
\begin{enumerate}
\item $val_{\mathcal{I}, \eta}(\nu, p(\theta_1, \dots,\theta_n ) 
= I(p)(val_{\mathcal{I}, \eta}(\nu, \theta_1), \dots, val_{\mathcal{I}, \eta}(\nu, \theta_n))$.
\item $val_{\mathcal{I}, \eta}(\nu, \phi~\wedge~\psi) = {\tt true}$ iff $val_{\mathcal{I}, \eta}(\nu,\phi) = {\tt true} $
and $val_{\mathcal{I}, \eta}(\nu,\psi) = {\tt true} $.
\item $val_{\mathcal{I}, \eta}(\nu, \phi~\vee~\psi ) = {\tt true}$ iff $val_{\mathcal{I}, \eta}(\nu,\phi) = {\tt true} $
or $val_{\mathcal{I}, \eta}(\nu,\psi) = {\tt true} $.
\item $val_{\mathcal{I}, \eta}(\nu, \neg \phi) = {\tt true}$ iff $val_{\mathcal{I}, \eta}(\nu,\phi) \neq {\tt true} $.
\item $val_{\mathcal{I}, \eta}(\nu, \phi \rightarrow \psi) = {\tt true}$ iff $val_{\mathcal{I}, \eta}(\nu,\phi) \neq {\tt true} $
or $val_{\mathcal{I}, \eta}(\nu,\psi) = {\tt true} $.
\item $val_{\mathcal{I}, \eta}(\nu, \forall x \phi ) = {\tt true}$ iff $val_{\mathcal{I}, \eta_{ [x \mapsto d]}}(\nu, \phi) = {\tt true}$
for all $d \in \mathbb{R}$.
\item $val_{\mathcal{I}, \eta}(\nu, \exists x \phi ) = {\tt true}$ iff $val_{\mathcal{I}, \eta_{ [x \mapsto d]}}(\nu, \phi) = {\tt true}$
for some $d \in \mathbb{R}$.
\item $val_{\mathcal{I}, \eta}(\nu, [\alpha] \phi ) = {\tt true}$ 
iff $val_{\mathcal{I}, \eta}(\omega, \phi) = {\tt true}$
for all states $\omega$ for which the 
transition relation (defined below) satisfies $(\nu, \omega) \in \rho_{\mathcal{I}, \eta}(\alpha)$.
\item $val_{\mathcal{I}, \eta}(\nu, \left< \alpha \right> \phi ) = {\tt true}$ iff
$val_{\mathcal{I}, \eta}(\omega, \phi)$
for some state $\omega$ such that the transition relation satisfies $(\nu, \omega) \in \rho_{\mathcal{I}, \eta}(\alpha)$.
\end{enumerate}
\end{definition}
 
We now define the transition semantics of hybrid programs. We already saw a glimpse of it in the definition of valuation of formulas,
since the formulas and programs of d$\mathcal{L}$ are constructed coinductively.

\begin{definition}[Transition semantics of hybrid programs (\cite{PlatzerLAHS}, Defn. 2.7)]
The valuation of a hybrid program $\alpha$, denoted $\rho_{\mathcal{I},\eta}(\alpha)$ is a transition relation on states
that specifies which states are reachable from a state $\nu$ under the program $\alpha$, and is defined inductively as follows.
\begin{enumerate}
\item $(\nu, \omega) \in \rho_{\mathcal{I}, \eta}(x_1 \coloneqq \theta_1, \dots, x_n \coloneqq \theta_n )$ iff
the state $\omega$ equals the state obtained by modification of $\nu$ as
$\nu[x_1 \mapsto val_{\mathcal{I},\eta}(\nu, \theta_1)], \dots, \nu[x_n \mapsto val_{\mathcal{I},\eta}(\nu, \theta_n)]$.
\item $(\nu, \omega) \in \rho_{\mathcal{I}, \eta}(\{x_1' = \theta_1, \dots, x_n' = \theta_n \& H \} )$ iff there is a flow $f$ of some
duration $r \ge 0$ from $\nu$ to $\omega$ along the differential equations $x_1' = \theta_1, \dots, x_n' = \theta_n$ that always
respects the invariant $H$.
\item $\rho_{\mathcal{I}, \eta}(? \chi ) = \{(\nu, \nu)~|~val_{\mathcal{I}, \eta}(\nu, \chi) = {\tt true} \}$
\item $\rho_{\mathcal{I}, \eta}(\alpha \cup \beta) =  \rho_{\mathcal{I}, \eta}(\alpha) \cup  \rho_{\mathcal{I}, \eta}(\beta)$
\item $\rho_{\mathcal{I}, \eta}(\alpha; \beta ) = \rho_{\mathcal{I}, \eta}(\alpha) \circ  \rho_{\mathcal{I}, \eta}(\beta) $
\item $(\nu, \omega) \in \rho_{\mathcal{I}, \eta}(\alpha^*)$ iff there is a sequence
of states states $\nu_0,\dots,\nu_n$ with $n \ge 0$, $\nu = \nu_0$, and $\nu_n = \omega$ such that 
$(\nu_i, \nu_{i+1}) \in \rho_{\mathcal{I}, \eta}(\alpha)$ for each $0 \le i \le n-1$.
\end{enumerate}
\end{definition}

\subsection{Soundness proof for forward invariant cut}
Fix an interpretation $\mathcal{I}$ and an assignment $\eta$.
From semantics of the first premise, if $\nu \in C$ and $(\nu, \omega) \in \rho_{\mathcal{I},\eta}(\alpha)$,
then $\omega \in C$.
From the semantics of the second premise, if $\omega \in C$, then $\omega \in S$
From the semantics of the third premise, if $\nu \in I$ and $\nu \notin C$, and $\omega$
is such that  $(\nu, \omega) \in \rho_{\mathcal{I},\eta}( (\alpha; ?\neg C)^* )$, then $\omega \in S$.
This is equivalent to saying that for any $\omega$ such that there is a sequence of states $\nu_0, \dots, \nu_n$, with $\nu_0 = \nu \in I$
and $\nu_n = \omega$, $n \in \mathbb{N}$,
and $(\nu_i, \nu_{i+1}) \in \rho_{\mathcal{I},\eta}(\alpha; ?\neg C )$ for each $0 \le i \le n-1$, it is the case that
$\omega \in S$.

The proof is to show by induction that any state reachable by $\alpha^*$ from $I$ in $n \ge 0$ executions of $\alpha$
must be contained in
$S$.

For the base case, let $n = 0$. Then given $\nu \in I$, the only reachable state by a sequence of length zero is $\nu$
itself. If $\nu \in C$, then $\nu in S$ by semantics of the second premise. If $\nu \notin C$,
we have that $(\nu, \nu) \in \rho_{\mathcal{I},\eta}( (\alpha; ?\neg C)^* )$ by a chain of length zero,
so that by semantics of the third premise, $\nu \in S$.

As an inductive hypothesis, suppose that for every $\omega$ reachable by a chain of length $n$, $\omega \in S$
 (\ie, there exists $\nu_0, \dots, \nu_n$
with $\nu_0 = \nu$ and $\omega = \nu_n$ such that $(\nu_i, \nu_{i+1}) \in \rho_{\mathcal{I},\eta}(\alpha)$, for $0 \le i \le n-1$.
Now choose any state $\xi$ such that there is a chain of length $n+1$, $\nu_0, \dots, \nu_{n+1}$ with $\nu_0 = \nu$
and $\nu_{n+1}=\xi$, such that  $(\nu_i, \nu_{i+1}) \in \rho_{\mathcal{I},\eta}(\alpha)$, for $0 \le i \le n$).

First suppose that $\nu_n \in C$. Then by semantics of the first premise, $\nu_{n+1} \in C$,
and then $\nu_{n+1} \in S$ by semantics of the second premise.
On the other hand, suppose $\nu_n \notin C$. We claim that for all $j \le n$, $\nu_j \notin C$. To see this, note
that if $\nu_j \in C$ for some $j \le n$, then $\nu_n \in C$ by semantics of the first premise, which
would contradict our assumption on $\nu_n$. Then we have that $(\nu_i, \nu_{i+1}) \in \rho_{\mathcal{I},\eta}( \alpha; ?\neg C )$
for all $0 \le i \le n$. By semantics of the third premise, it follows that $\xi \in S$.
This establishes the theorem.

\subsection{System dynamics for the Engine Fuel Control Model}

We now present the model parameters and the ODEs for the Engine Fuel
Control model. Figure \ref{fig:odestartup} details the equations for the $\startup$ mode, and Fig. \ref{fig:odes} provides the dynamic equations for the $\normalmode$ mode. In the figures, $\frac{dp}{dt}=f_p$, $\frac{dr}{dt}=f_r$, $\frac{d\pest}{dt}=f_{\pest}$, and $\frac{di}{dt}=f_i$. 

\begin{figure*}[t]

\begin{equation}\nonumber
\begin{array}{lll}
\vspace{0.6em}
f_p  & =  & \displaystyle 
                c_1 \left(  2\hat{u_1}\sqrt{\frac{p}{c_{11}}-\left(\frac{p}{c_{11}} \right)^2} - 
                            \left(c_3+c_4c_2p  + c_5 c_2 p^2 + c_6 c_2^2 p \right)
                    \right)  \\
\vspace{0.6em}
f_r & =  &  \displaystyle 
	                4\left( \frac{c_3 + c_4 c_2 p + c_5c_2 p^2 + c_6 c_2^2 p}
                             {c_{13} (c_3 + c_4 c_2 \pest^2 + c_5 c_2 \pest^2 + c_6 c_2^2 \pest)}
                        - r
                  \right)  \\
f_{\pest} & = & c_1 \left(2 \hat{u_1}\sqrt{\frac{p}{c_{11}} - \left(\frac{p}{c_{11}}\right)^2} - 
                            c_{13}\left(c_3 + c_4 c_2 \pest + c_5 c_2 \pest^2 + c_6 c_2^2 \pest\right)
                      \right) \\
f_i  & =  & 0  
\end{array}
\end{equation}

\caption{System dynamics for the Engine Fuel Control System in the $\startup$ mode.\label{fig:odestartup}}
\end{figure*}

\begin{figure*}[t]

\begin{equation}\nonumber
\begin{array}{lll}
\vspace{0.6em}
f_p  & =  & \displaystyle 
                c_1 \left(  2\hat{u_1}\sqrt{\frac{p}{c_{11}}-\left(\frac{p}{c_{11}} \right)^2} - 
                            \left(c_3+c_4c_2p  + c_5 c_2 p^2 + c_6 c_2^2 p \right)
                    \right)  \\
\vspace{0.6em}
f_r  & =  &  \displaystyle 
	                4\left( \frac{c_3 + c_4 c_2 p + c_5c_2 p^2 + c_6 c_2^2 p}
                             {c_{13} (c_3 + c_4 c_2 \pest^2 + c_5 c_2 \pest^2 + c_6 c_2^2 \pest)(1 + i + c_{14}(r - c_{16}))}
                        - r
                  \right)  \\
f_{\pest} & = & c_1 \left(2 \hat{u_1}\sqrt{\frac{p}{c_{11}} - \left(\frac{p}{c_{11}}\right)^2} - 
                            c_{13}\left(c_3 + c_4 c_2 \pest + c_5 c_2 \pest^2 + c_6 c_2^2 \pest\right)
                      \right) \\
f_i  & =  & c_{15}(r-c_{16})  
\end{array}
\end{equation}

\caption{System dynamics for the Engine Fuel Control System in the $\normalmode$ mode.\label{fig:odes}}
\end{figure*}

{\footnotesize
\begin{table}[t]
\centering
\caption{Model Parameters for the Engine Fuel Control System.\label{table:parameters}} 
\begin{tabular}{|c|c|}
\hline Parameter & Value \\ \hline 
$c_1$ & $0.41328$  \\
$c_2$ & $200.0$   \\
$c_3$ & $-0.366$  \\
$c_4$ & $0.08979$ \\
$c_5$ & $-0.0337$ \\
$c_6$ & $0.0001$  \\
$c_7$ & $2.821$   \\ 
$c_8$ & $-0.05231$ \\
$c_9$ & $0.10299$ \\
$c_{10}$ & $-0.00063$  \\
$c_{11}$ & $1.0$  \\
$c_{12}$ & $14.7$ \\
$c_{13}$ & $0.9$  \\
$c_{14}$ & $0.4$  \\
$c_{15}$ & $0.4$  \\
$c_{16}$ & $1.0$  \\
$\hat{u_1}$ & $23.0829$ \\ 
\hline
\end{tabular}
\end{table}}

We translate the system so that the origin coincides with the $\normalmode$
equilibrium point $p\approx 0.8987$, $r=1.0$, $\pest \approx 1.077$,
$i\approx 0.0$ and call the translated variables $\phat$, $\rhat$,
$\pesthat$, and, $\ihat$, respectively.

\end{document}